\documentclass[12pt]{amsart}

% Main packages
\usepackage{amsmath}
\usepackage{amssymb}
\usepackage{amsthm}
\usepackage{hyperref}
\usepackage{xypic}
\usepackage{graphicx}

% TODO: Packages used for writing -- can be removed when draft is done
\usepackage{fullpage}
\usepackage{setspace}
\usepackage{xcolor}

% Theorem setup
\theoremstyle{plain}
\newtheorem{definition}{Definition}
\newtheorem{proposition}{Proposition}

\newtheorem{corollary}{Corollary}

\theoremstyle{definition}

\theoremstyle{remark}

% Operators & others

\DeclareMathOperator{\argmin}{argmin}
\def\shf{\mathcal}

% TODO: Temporary commands for in-text TODO's & notes

% --------------------------------------------------------------------------------------------------
% Front-matter setup
% --------------------------------------------------------------------------------------------------

\title{Superresolving star clusters with sheaves}

\author{Michael Robinson and Christopher Capraro}\thanks{SRC, Inc., (mrobinson@srcinc.com)}

% --------------------------------------------------------------------------------------------------
%	Document
% --------------------------------------------------------------------------------------------------

\begin{document}

\begin{abstract}
This article explains an optimization-based approach for counting and localizing stars within a small cluster, 
based on photon counts in a focal plane array.
The array need not be arranged in any particular way, 
and relatively small numbers of photons are required in order to ensure convergence.
The stars can be located close to one another, as the location and brightness errors were found to be low when the separation was larger than $0.2$ Rayleigh radii.
To ensure generality of our approach, it was constructed as a special case of a general theory built upon topological signal processing using the mathematics of sheaves.
\end{abstract}

% Make the title
\maketitle

% --------------------------------------------------------------------------------------------------
% Sections

% Introduction and problem statement
% !TeX root=./starsep-master.tex

\section{Introduction}
\label{sec-intro}

This article explains an optimization-based approach for counting and localizing stars wthin a small cluster, 
based on photon counts in a focal plane array.
The array need not be arranged in any particular way, 
and relatively small numbers of photons are required in order to ensure convergence.
The stars can be located close to one another, and good performance was obtained when the separation was larger than $0.2$ Rayleigh radii.
To ensure generality of our approach, it was constructed as a special case of a general theory built upon topological signal processing using the mathematics of sheaves.
While familiarity with sheaves is assumed in Section \ref{sec-approach}, it is not essential to understand the algorithm derived using them that is discussed in Section \ref{sec-results}.

\subsection{Historical context}
\label{sec-history}

Perhaps the most famous algorithm for separating nearly coincident stars in the CLEAN algorithm and its generalizations (see \cite{Hogbom_1974,mckinnon1990spectral,Stewart_2011} for instance, among many others).
These greedy algorithms rely on the geometric structure of stars, namely that they are point sources or nearly so. 

While greedy algorithms can yield good performance, they also can fail in dramatic ways.
This can be especially pronounced if photon counts are low.
This article proposes that an optimization-based approach can lend some robustness to an algorithm.
Optimization-based imaging using the Wasserstein metric has recently led to improvements in molecular microscopy \cite{mazidi2020quantifying}.
Our approach is a generalization of this idea.

% !TeX root=./starsep-master.tex

\section{Detailed problem statement}
\label{sec-problem_statement}

We can model a scene of $N$ stars illuminating a focal plane by using a simplistic Dirac point-mass model,
\begin{equation}
\label{eq:general_signal_model}
s(x; a_1, b_1, a_2, b_2, \dotsc, a_N, b_n) = \sum_{n=1}^N a_n \delta(|x-b_n|),
\end{equation}
where $a_n\in[0,\infty)$ is the brightness of each star, 
$b_n \in \mathbb{R}^2$ is the location of each star in the focal plane, 
and $x \in \mathbb{R}^2$ can be taken to be the location of an arbitrary pixel.

The response of an actual sensor to the incoming photons from a set of stars in the scene will not be a sum of Dirac point-masses
because of dispersion, diffraction, and other physical effects.  
Suppose that there are $M$ pixels in the focal plane, each collecting an integer number of photons.
This means that the measurements lie in $\mathbb{Z}^M$ (rather than $\mathbb{R}^M$).  
It is useful to record the photon counts as an integer-valued function $z=z(x) \in \ell^2(\mathbb{R}^2)$.
The problem to be solved in this article is therefore the determination of $N$, and the sets $\{a_n\}$ and $\{b_n\}$ from $z$.

% !TeX root=./starsep-master.tex

\section{Algorithmic implementation and results}
\label{sec-results}

\label{sec-stellar-solution}

The algorithm we used to count, locate, and characterize stars begins with a proposed maximum number of stars $P$. 
Convergence is not ensured by Proposition \ref{prop:unknown_source_optimization} and Corollary \ref{cor:correct_sources} if the true number of stars $N$ is larger than $P$.
The algorithm proceeds through the following steps, which ultimately minimize the local consistency radii of the sheaf $\mathcal{J}_P$:
\begin{enumerate}
\item Collect photon counts from all pixels into a vector $z$
\item For $i$ in $1, 2, \dotsc, P$,
\begin{enumerate}
\item Initial guess: 
\begin{enumerate}
\item If $i=1$, use the centroid of the photon count distribution as the initial guess for $b_1$ and the total photon count as $a_1$
\item Otherwise, use the previous step's set of star locations $b_1, \dotsc, b_{i-1}$ and magnitudes $a_1, \dotsc, a_{i-1}$ as an initial guess with $a_i = 0$ and $b_i$ chosen randomly.
\end{enumerate}	
\item Iteration: Solve the minimization problem Equation \ref{eq:known_source_optimization} given $i$ stars using the $i$ centroids as initial guesses for $b_1, \dotsc b_i$.
\item Store: the residual error, along with the parameters for the $i$ stars.
\end{enumerate}
\item Return the proposed number of stars, locations, and brightnesses corresponding to the smallest residual.
\end{enumerate}
We found that the minimization problem Equation \ref{eq:known_source_optimization} was solvable using the standard Matlab {\tt fmincon} function, 
though the initial guess stage was necessary to aid in ensuring convergence under repeated Monte Carlo runs.

\begin{figure}
\begin{center}
\includegraphics[width=3in]{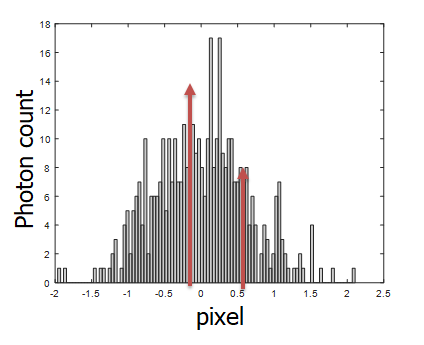}
\caption{A typical comparison between stars as Dirac distributions (arrows) and photon counts (histogram)}
\label{fig:star_metric}
\end{center}
\end{figure}

\begin{figure}
\begin{center}
\includegraphics[width=6in]{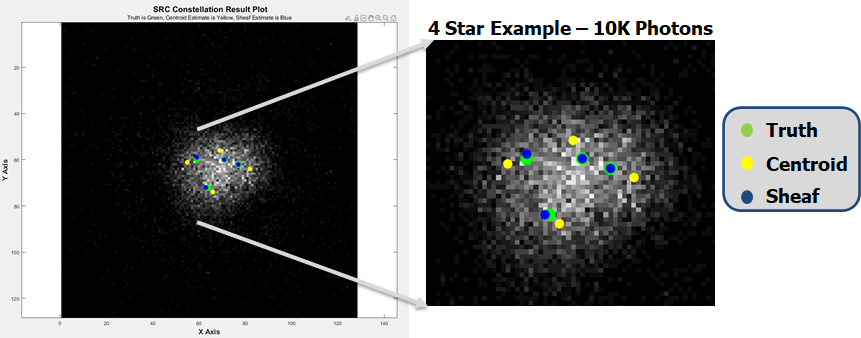}
\caption{A typical photon count distribution with $10 000$ photons (left) and (right), marked with the true star locations (green), a decomposition into four centroids (yellow), and the minimizer of consistency radius produced by our algorithm (blue)}
\label{fig:star_example}
\end{center}
\end{figure}

As a test of this approach, star fields consisting of between $1$ and $10$ stars were simulated.
These star fields were propagated through a simulated optical system, resulting in photon counts on an array of pixels.
These photon counts were then used to determine star locations and brightnesses using the sheaf approach discussed above,
a multiple centroiding approach, 
and the standard CLEAN algorithm.
Figure \ref{fig:star_example} shows a typical result for $10 000$ photons.

Assuming that the correct brightness and location of star $i$ is given by $(a_i,b_i)$, and that the algorithm's estimate of the same is $(a'_i,b'_i)$, 
an algorithm can be scored by its \emph{efficiency}, given by
\begin{equation}
\label{eq:efficiency}
Efficiency := 100 - \min_\phi \sqrt{\left(1-\frac{\#\text{image }\phi}{P}\right)^2+\sum_{i=1}^N \sum_{j=1}^P |a_{\phi(j)} - a'_i|^2+\sum_{i=1}^N \sum_{j=1}^P |b_{\phi(j)}-b'_i|^2},
\end{equation}
where $\phi : \{1, 2, \dotsc, P\} \to \{1,2, \dotsc, N\}$ ranges over all functions.
Note that $\#\text{image }\phi / P$ is the Jaccard index of the two sets of stars, given the function $\phi$ used as a matching.
The efficiency is closely related to a conical metric, but does not assume that the number of stars is the same in both sets.
Higher efficiency means that the algorithm is doing a better job of determining the correct number of stars, their brightnesses, and locations.
The best possible efficiency is $100$, while the worst is $0$.

\begin{figure}
\begin{center}
\includegraphics[width=7in]{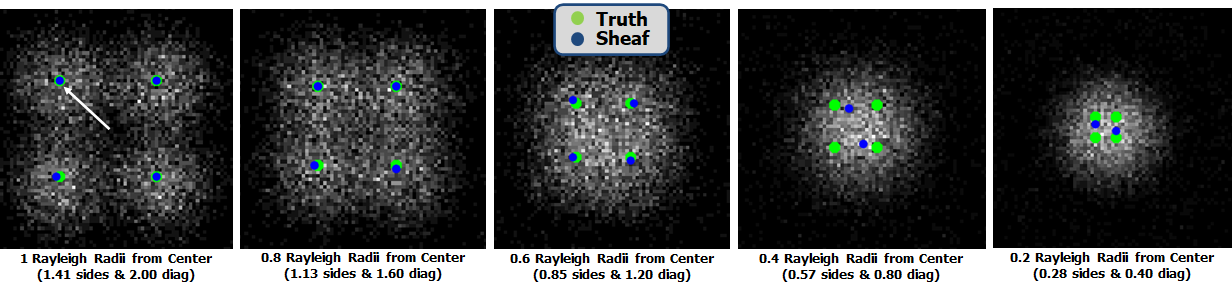}
\caption{Photon count distributions for four stars (green) at various inter-star separations, with the minimizers of consistency radius for the $\shf{J}_15$ sheaf superimposed (blue) and ten thousand ($10 000$) total photons..  (Star count unknown but less than $16$.)}
\label{fig:star_sepsweep_unknown}
\end{center}
\end{figure}

\begin{figure}
\begin{center}
\includegraphics[width=3in]{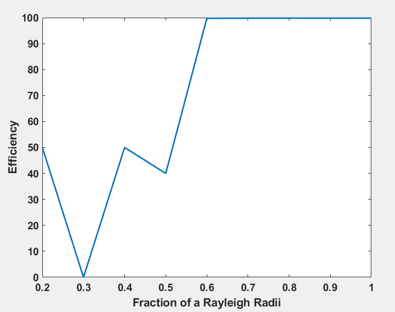}
\caption{Efficiency of the sheaf algorithm as a function of separation between two stars when the star count is unknown}
\label{fig:efficiency_v_sep_unknown}
\end{center}
\end{figure}

\begin{figure}
\begin{center}
\includegraphics[width=7in]{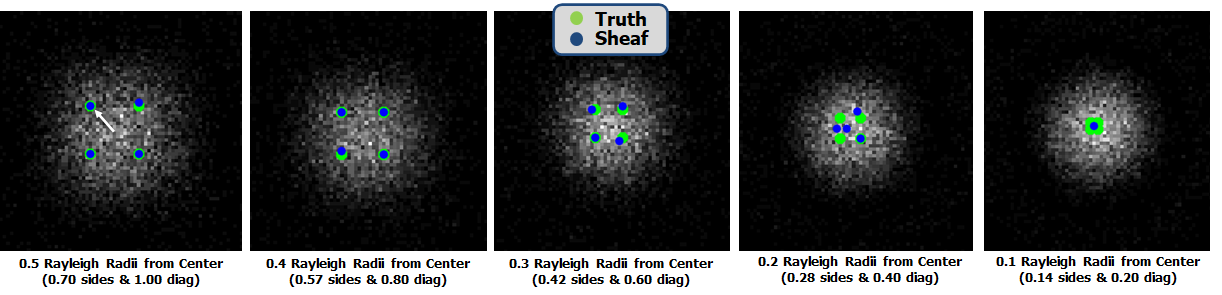}
\caption{Photon count distributions for four stars (green) at various inter-star separations, with the minimizers of consistency radius for the $\shf{J}_4$ sheaf superimposed (blue) and ten thousand ($10 000$) total photons..  (Star count known to be $4$.)}
\label{fig:star_sepsweep_known}
\end{center}
\end{figure}

\begin{figure}
\begin{center}
\includegraphics[width=3in]{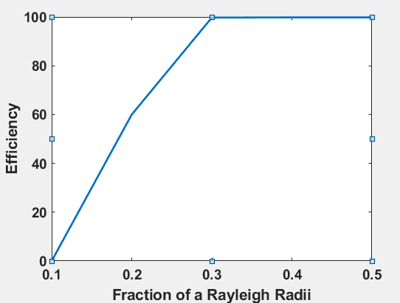}
\caption{Efficiency of the sheaf algorithm as a function of separation between two stars when the star count is known}
\label{fig:efficiency_v_sep_known}
\end{center}
\end{figure}

In order to compare the performance of the sheaf algorithm with the standard CLEAN algorithm \cite{Hogbom_1974}, we ran a set of $10$ Monte Carlo runs of $7$ different configurations of stars, as shown in Table \ref{tab:constellation_tests}.
Using the efficiency Equation \ref{eq:efficiency} to rate the performance of both algorithms,
the results from all runs of all tests are shown in Figure \ref{fig:clean_v_sheaf_overview}.
The sheaf algorithm did better than CLEAN in $55$ of the $70$ runs.
In $42$ of those $55$ runs the difference in efficiency was greater than $10$.

\begin{table}
\begin{center}
\caption{Constellation tests performed}
\label{tab:constellation_tests}
\begin{tabular}{|l|c|c|}
\hline
Test & Star count & Brightness\\
\hline
\hline
A&1&N/A\\
\hline
B&2&Equal\\
\hline
C&2&Variable\\
\hline
D&3&Equal\\
\hline
E&3&Variable\\
\hline
F&7&Variable\\
\hline
G&10&Variable\\
\hline
\end{tabular}
\end{center}
\end{table}

\begin{figure}
\begin{center}
\includegraphics[width=3in]{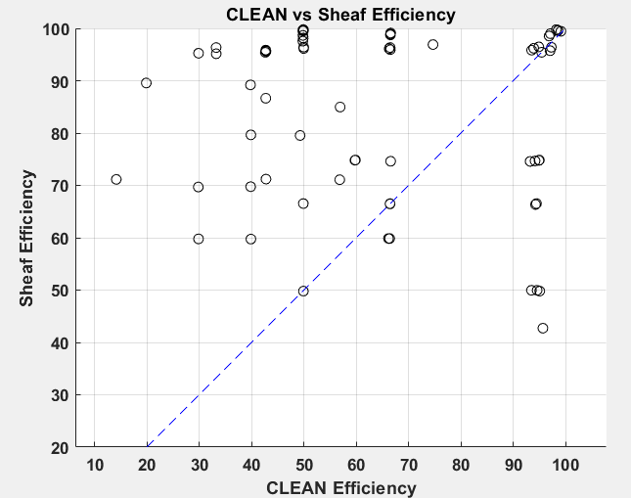}
\caption{Efficiency of the sheaf algorithm versus the CLEAN algorithm.  The dashed line is the diagonal (equal efficiency from both algorithms).}
\label{fig:clean_v_sheaf_overview}
\end{center}
\end{figure}

\begin{figure}
\begin{center}
\includegraphics[width=3in]{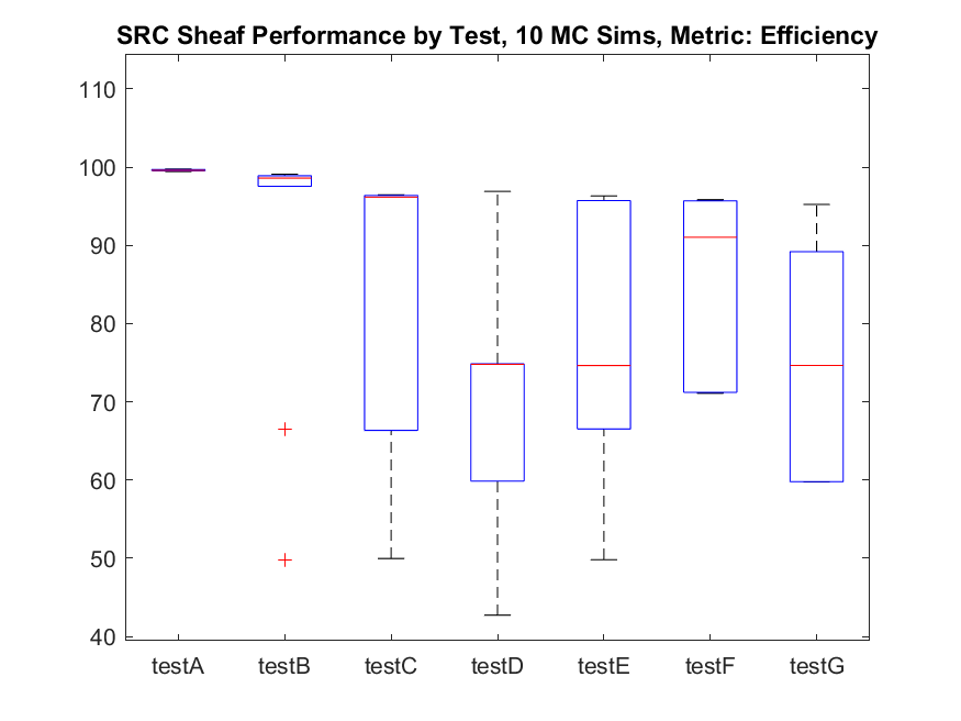}
\includegraphics[width=3in]{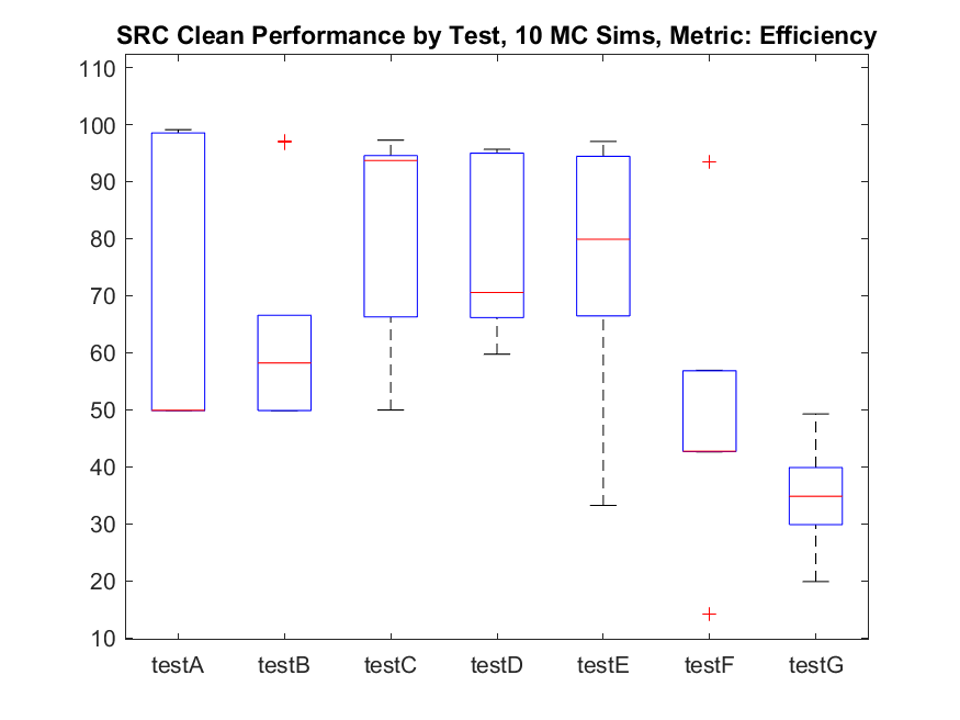}
\caption{Efficiency of the sheaf and CLEAN algorithms split by Test (see Table \ref{tab:constellation_tests}).}
\label{fig:clean_v_sheaf_eff}
\end{center}
\end{figure}

\begin{figure}
\begin{center}
\includegraphics[width=3in]{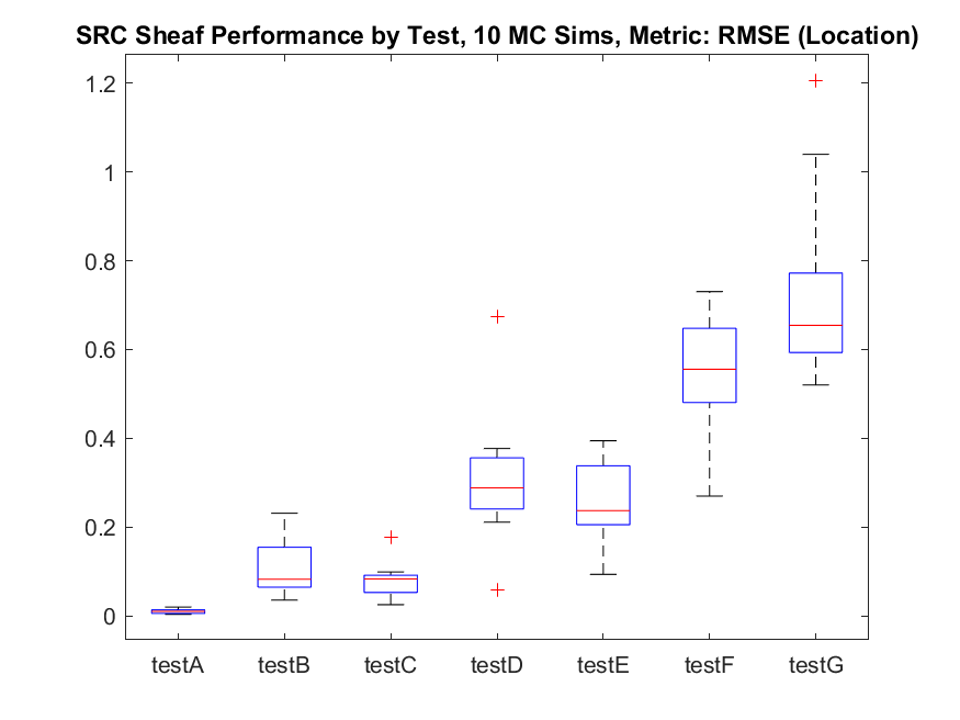}
\includegraphics[width=3in]{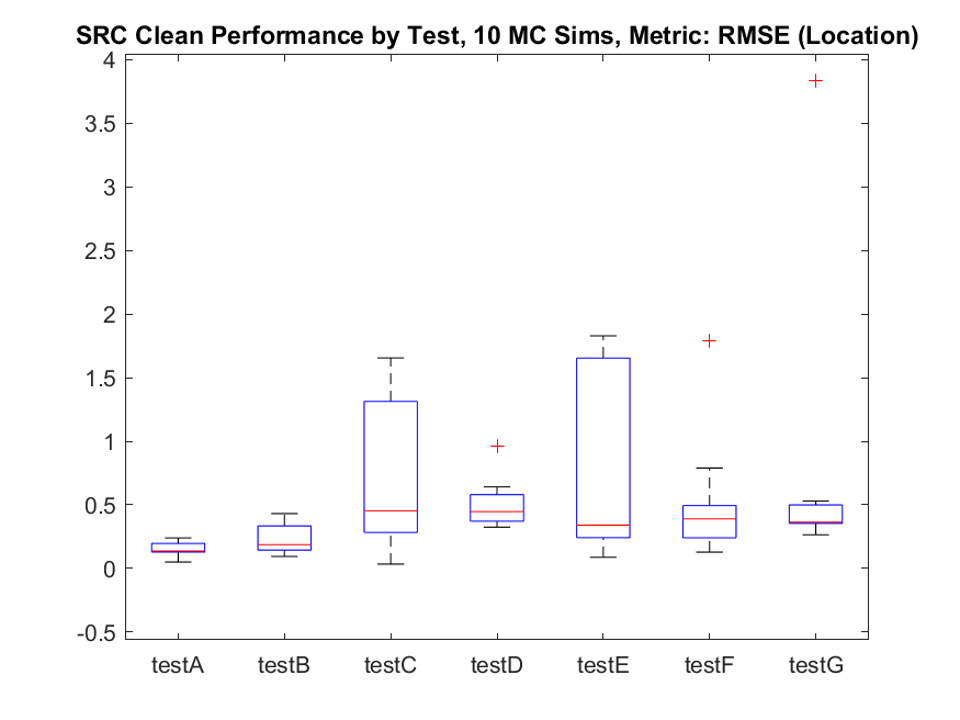}
\caption{Location root mean squared error (RMSE) of the sheaf and CLEAN algorithms split by Test (see Table \ref{tab:constellation_tests}).}
\label{fig:clean_v_sheaf_loc}
\end{center}
\end{figure}

\begin{figure}
\begin{center}
\includegraphics[width=3in]{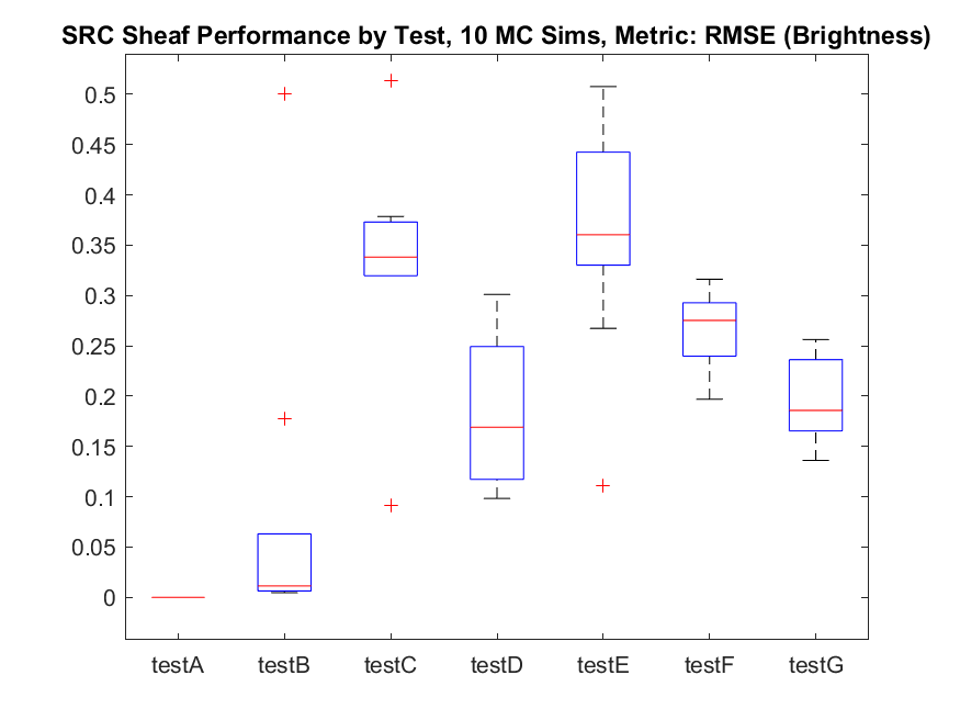}
\includegraphics[width=3in]{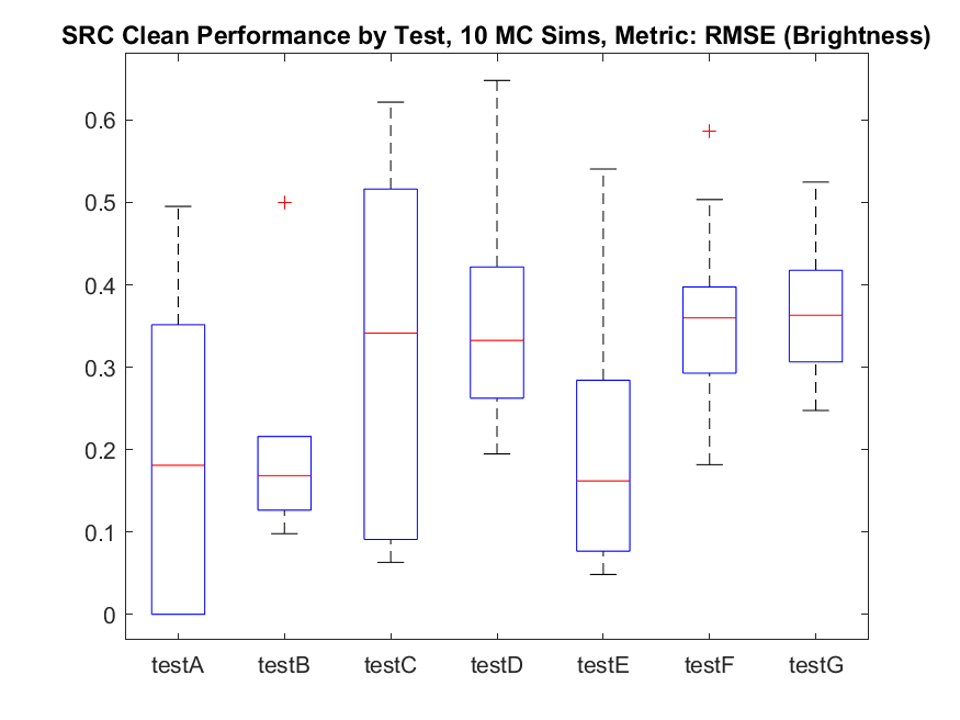}
\caption{Brightness root mean squared error (RMSE)  of the sheaf and CLEAN algorithms split by Test (see Table \ref{tab:constellation_tests}).}
\label{fig:clean_v_sheaf_mag}
\end{center}
\end{figure}

Delving more deeply into the performance differences, Figure \ref{fig:clean_v_sheaf_overview} shows a breakdown of efficiency for each Test shown in Table \ref{tab:constellation_tests}.  For Tests A and B the sheaf algorithm performs nearly optimally.  For Tests F and G (the hardest tests), the sheaf algorithm outperforms the CLEAN algorithm by a wide margin -- nearly twice the efficiency, as shown in Figure \ref{fig:clean_v_sheaf_eff}.  For the other tests, the sheaf algorithm's performance is substantially more variable than CLEAN.

The location errors for both algorithms and all tests are shown in Figure \ref{fig:clean_v_sheaf_loc}.  
As should be expected, location errors increase in both algorithms as the number of stars increase.  
Generally, the sheaf algorithm yields better location estimates than CLEAN for a smaller number of stars, 
while CLEAN yields better location estimates than the sheaf for a larger number of stars.

Brightness estimate errors for both algorithms and all tests are shown in Figure \ref{fig:clean_v_sheaf_mag}.  
Both tests yield highly variable performance on brightness estimates regardless of test conditions.
At least in the sheaf algorithm case, this may be a direct result of the use of the Euclidean metric on $\ell^2(\mathbb{R}^2)$,
since -- in contrast to the Wasserstein metric \cite{mazidi2020quantifying} -- the convergence in brightness is not guaranteed.  

% Conclusion
% !TeX root=./starsep-master.tex

\section{Conclusion}
\label{sec-conclusion}

The results we obtained from our algorithm typically exceeded the performance of the CLEAN algorithm, especially with more challenging scenes,
which indicates that this method shows promise.
The fact that it converges very well for small numbers of stars and less well for larger numbers of stars suggests there is room for improvement.
In particular, the use of a general optimization solver is not particularly efficient, and the solver converges slowly.

Computational considerations were not central to our approach on this problem, but could be important in future work.  
We have already developed a general library for sheaf algorithms \cite{pysheaf},
and could port this library to a high performance computing platform.
We expect that high-performance sheaf computational tools would have broad applicability, 
providing numerous avenues for improvements.

Finally, the sheaf-based approach discussed in Section \ref{sec-approach} is extremely general,
and applies without change to many other imaging techniques, not just focal plane arrays.
Quantum imaging with small number of photons \cite{tsang2019resolving} fits neatly into this perspective and deserves to be explored in future work.

% --------------------------------------------------------------------------------------------------
% Acknowledgements
\section*{Acknowledgements}
The authors thank Tom Ruekgauer and his team for the use of the simulated optical data presented in this article.

The views, opinions and/or findings expressed are those of the author and should not be interpreted as representing the official views or policies of the Department of Defense or the U.S. Government.
This material is based upon work supported by the
Defense Advanced Research Projects Agency (DARPA) under Agreement No. HR00112090125.

% Details -- as an appendix
% !TeX root=./starsep-master.tex

\section{Appendix: Modeling starfields sheafily}
\label{sec-approach}

Consider that each star is determined by its magnitude (an element of $[0,\infty)$) and location (an element of $\mathbb{R}^2$).
These properties have different units and thereby suggest that the correct space for the parameters defining a star is
\begin{equation*}
I = [0,\infty) \times \mathbb{R}^2 \subset \mathbb{R} \times \mathbb{R}.
\end{equation*}
Conventionally, we use $a$ for the brightness (or magnitude) of the star, and will use $b$ for its location.

If there are $N$ stars present in the scene, this would suggest that $I^N$ is the correct space of parameters defining all stars.
Since the order in which the stars are listed in such a product does not matter,
if there are $N$ stars present in the scene, the correct space of parameters for defining all of the stars is the quotient space $I^N/S_N$,
where $S_N$ is the group of permutations on $N$ items.

\subsection{Sheaf model warmup: known star count}
\label{sec-sheaf_known}

If we know the number of stars is $N$, we can express the relationship between the observations and the star parameters by a sheaf diagram
\begin{equation}
\label{eq:fixed_source_sheaf}
\xymatrix{
&I^N/S_N \ar[dl]_{s(x_1;\cdot)} \ar[d]^{s(x_2;\cdot)} \ar[drr]^{s(x_M;\cdot)} \\\
\mathbb{R} & \mathbb{R} & \dotsb & \mathbb{R}
}
\end{equation}
The diagram expresses the fact that each observation is determined by Equation \ref{eq:general_signal_model}, 
and that no observation functionally determines any other.
One should be aware that the lack of functional dependence does not imply that the observations are independent.
Specifically, it may happen that one or more measurements (values in the second row of the diagram) suffice to determine a unique value in $I^N/S_N$,
after which point all of the other measurements can be determined from this unique value.
This is exactly the situation in which an adaptive method is warranted, since later measurements are not actually required!

Suppose we have a set of $M$ measurements at $x_1$, $x_2$, $\dotsc$, $x_M$, 
for which we have observations $z_1$, $z_2$, $\dotsc$, $z_M$ respectively.
These observations form a partial assignment to the above sheaf, which is supported on the second row.

The partial assignment is not supported at the top row, since that would be tantamount to knowing all of the star parameters.
Nevertheless, assuming that no noise is present and that we have correctly found the parameters $a_1$, $b_1$, $\dotsc$, $a_N$, $b_N$, 
then it should be the case that
\begin{equation*}
z_m = s(x_m;a_1,b_1,\dotsc,a_N,b_N)
\end{equation*}
for all $m=1, \dotsc, M$.  

If $M$ real measurements are taken from the scene determined by $N$ stars,
then the previous discussions led to a simple sheaf diagram
\begin{equation*}
\xymatrix{
I^N/S_N \ar[d]^f\\ \mathbb{R}^M
}
\end{equation*}
where 
\begin{equation}
\label{eq:S_function}
f_N(a_1,b_1,\dotsc,a_N,b_N) := \begin{pmatrix}
s(x_1; a_1,b_1, \dotsc, a_N,b_N)\\
s(x_2; a_1,b_1, \dotsc, a_N,b_N)\\
\vdots\\
s(x_M; a_1,b_1, \dotsc, a_N,b_N)\\
\end{pmatrix},
\end{equation}
simply aggregates all of the pixel values into a single vector.
One can consider this as a function taking specific values on regions within the plane $\mathbb{R}^2$,
so we can interpret $f_N : I^N/S_N \to \ell^2(\mathbb{R}^2)$.

In the (unlikely) situation that the pixels have taken exactly the values specified by Equation \ref{eq:general_signal_model}, 
this corresponds to a global section of the sheaf diagram in Equation \ref{eq:fixed_source_sheaf}.
Since global sections have zero consistency radius and the stalks of the sheaf are all metric spaces, 
any assignment which is not a global section will have positive consistency radius.

On the other hand, if the $f_N$ function defined in Equation \ref{eq:S_function} is injective then there is only one global section.
This means that we can recast the problem of obtaining the stars from the observations as the minimization of consistency radius.
With essentially no further work, we obtain the following result.

\begin{corollary}
\label{cor:known_source_optimization}
If the function $f_N$ defined in Equation \ref{eq:S_function} is injective then the solution to
\begin{equation}
\label{eq:known_source_optimization}
\argmin_{\{a_n,b_n\}_{n=1}^N} \left(\sum_{m=1}^M \left|s(x_m;a_1,b_1, \dotsc, a_n,b_n) - z_m \right|^2\right).
\end{equation}
is the correct star decomposition for the scene $z_m$ if one exists.
\end{corollary}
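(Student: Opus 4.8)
The plan is to recognize the objective in Equation~\ref{eq:known_source_optimization} as the squared Euclidean distance $\|f_N(a_1,b_1,\dotsc,a_N,b_N) - z\|^2$ between the image of the candidate parameters under the aggregation map $f_N$ of Equation~\ref{eq:S_function} and the observation vector $z = (z_1,\dotsc,z_M)$, and then to exploit that this quantity is exactly the squared consistency radius of the partial assignment $z$ once it is completed by the proposed top-row value in the sheaf of Equation~\ref{eq:fixed_source_sheaf}. Once this identification is made, the entire argument reduces to non-negativity of a sum of squares together with the injectivity hypothesis, which is why essentially no further work is needed beyond the sheaf set-up already in place.

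First I would invoke the standing hypothesis that a correct decomposition exists. By definition this furnishes a point $p^\ast = (a_1^\ast,b_1^\ast,\dotsc,a_N^\ast,b_N^\ast) \in I^N/S_N$ with $s(x_m;a_1^\ast,b_1^\ast,\dotsc,a_N^\ast,b_N^\ast) = z_m$ for every $m$, equivalently $f_N(p^\ast) = z$. Substituting $p^\ast$ into the objective therefore yields the value $0$. Since each summand $\left|s(x_m;\cdot) - z_m\right|^2$ is non-negative, the objective is bounded below by $0$, so $p^\ast$ attains the global minimum and the minimal value equals $0$. In particular the $\argmin$ is non-empty and the infimum is achieved, so no separate compactness or attainment argument is required.

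Next I would show the minimizer is unique and equals $p^\ast$. Let $p$ be any element of the $\argmin$. Then the objective at $p$ equals its minimum value $0$, and a vanishing sum of non-negative terms forces every term to vanish, so $f_N(p) = z = f_N(p^\ast)$. Injectivity of $f_N$ on $I^N/S_N$ (the hypothesis) then gives $p = p^\ast$, whence the $\argmin$ is the singleton $\{p^\ast\}$, which is precisely the correct star decomposition. Because $f_N$ is defined on the quotient $I^N/S_N$ and the objective is manifestly invariant under relabeling the stars, this conclusion is a statement about the unordered set of stars, as intended.

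The statement carries almost no difficulty once the framework is in place: the only substantive input is the injectivity hypothesis, which is exactly what upgrades existence of a zero-residual point to uniqueness of the optimizer. The one place that warrants care is confirming that the objective really is the squared consistency radius of the sheaf in Equation~\ref{eq:fixed_source_sheaf}, which relies on the earlier observation that global sections have zero consistency radius while every non-section has strictly positive consistency radius because all stalks are metric spaces. I therefore expect the main obstacle to lie not in this corollary but in verifying the injectivity of $f_N$ for a concrete configuration of pixel locations $x_1,\dotsc,x_M$, a question this statement deliberately takes as a hypothesis.
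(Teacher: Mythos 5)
Your proposal is correct and follows essentially the same route as the paper: the paper's ``proof'' is just the discussion preceding the corollary (a zero-residual point is a global section, non-sections have positive consistency radius, and injectivity of $f_N$ forces the global section to be unique), which you have simply made explicit via non-negativity of the sum of squares and the identification $f_N(p^\ast)=z$. The only cosmetic difference is that you work with the squared consistency radius rather than the consistency radius itself, which changes nothing about the $\argmin$.
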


The reader may correctly argue that the optimization problem in Equation \ref{eq:known_source_optimization} can be obtained easily -- though not solved -- by inspection!
However, if the number of stars is not known, 
then the correct optimization problem that obtains the stars parameters is difficult to state explicitly.
However, the correct optimization problem is still a consistency radius minimization,
but for a more elaborate sheaf.
This sheaf accounts both for the unknown number of stars and the fact that 
the number of measurements that should be taken depends on the (unknown) number of stars.

\subsection{Sheaf model encore: unknown star count}
\label{sec-sheaf_unknown}

Now let us assume that the scene is built from an unknown number of stars.
We cannot use a single $f_N$ map, because we do not know $N$.
Therefore, we need to consider many possibilities for the number of stars.
For instance, let us consider that there may be as many as $P$ stars in the scene. 

\begin{definition}
\label{def:j_p}
The sheaf $\shf{J}_P$ is defined by the diagram
\begin{equation}
\label{eq:j_p}
\xymatrix{
I\ar[drr]_{f_1(\cdot)} && I^2/S_2\ar[d]_{f_2(\cdot,\cdot)} && I^3/S_3 \ar[dll]_{f_3(\cdot,\cdot,\cdot)} && \dotsb & I^P/S_P \ar[dlllll] \\
&&\mathbb{R}^M\\
}
\end{equation}
in which each of the restriction maps are given by the map defined in Equation \ref{eq:S_function} with different numbers of stars.
\end{definition}

Notice that the $P$ in $\shf{J}_P$ is the maximum number of stars we will attempt to consider, 
which may not have much (if anything) to do with the actual number of stars $N$.
Of course, we hope that $N \le P$ so that the actual number of stars is considered in our analysis.
The global consistency radius of an assignment to this sheaf is given by
\begin{equation*}
\sum_{i=1}^P \|f_i(w_{i,1}, \dotsc, w_{i,i}) - z\|,
\end{equation*}
where $z \in \mathbb{R}^M$ is the measurement and the $w_{i,j}$ is the proposed $j$-th star parameter when we are considering a representation of the scene with $i$ stars.
The global consistency radius is simply the sum of consistency radii of each subproblem, in which we consider a given number of stars.
On the other hand, each one of these subproblems is the \emph{local} consistency radius for an open set in the base space topology for $\shf{J}_P$.
If the $f_\bullet$ functions are injective, then it is clear that if $N>P$, the only set upon which the local consistency radius can vanish is the set consisting of only the observation (ie. the bottom row of the sheaf diagram).
Conversely, if the $f_\bullet$ functions are injective and $N \le P$, then the minimum local consistency radius possible on the open set constructed as the star over the element with stalk $I^P/S_P$ is zero.

One particular drawback of $\shf{J}_P$ though, is that values of an assignment on the different stalks in the top row of Equation \ref{eq:j_p} have nothing to do with one another.
Another expression of this fact is that the base space topology for $\shf{J}_P$ is quite large.

\subsection{Pivoting to an algorithm}
\label{sec-probabilistic}

If we combine the sheaf methodology in the previous sections (Sections \ref{sec-sheaf_known} and \ref{sec-sheaf_unknown}) 
with observation that the collection of photon counts from each pixel are best thought of as a (single) function on the plane,
we obtain a way to recover star parameters from a distributional measurement.

To formalize this idea, consider the space $\ell^2(\mathbb{R}^2)$ of functions on star locations.
In brief, a measurement specifies the amount of mass on a set of star positions.  
The map $i_N: I^N/S_N \to \ell^2(\mathbb{R}^2)$ reinterprets stars as measures
\begin{equation*}
i_N\left((a_1,b_1), \dotsc (a_N,b_N)\right) := \sum_{n=1}^N a_n \delta_{b_n}(x).
\end{equation*}

We can use these $i_N$ maps as the restriction maps in $\shf{J}_P$, which still works structurally, namely
\begin{equation}
\xymatrix{
I\ar[drr]_{i_1(\cdot)} && I^2/S_2\ar[d]_{i_2(\cdot,\cdot)} && I^3/S_3 \ar[dll]_{i_3(\cdot,\cdot,\cdot)} && \dotsb & I^P/S_P \ar[dlllll]^{i_P} \\
&&\ell^2(\mathbb{R}^2)\\
}
\end{equation}

Algorithmically, if we obtain a measurement as an element $z\in \ell^2(\mathbb{R}^2)$ (which need not be a sum of a small number of Dirac distributions), then the best star decomposition is still obtained by minimizing local consistency radii:
\begin{equation*}
\|i_P(w_{1}, \dotsc, w_{P}) - z\|^2,
\end{equation*}
and selecting the smallest $P$ for which this quantity is zero.
In cases where the source magnitudes are widely varying, it is likely that a greedy approach is possible (and may be preferable).
In Section \ref{sec-results}, we employ a compromise approach by reusing solutions with smaller number of proposed stars when solving for larger numbers of proposed stars.

\begin{proposition}
\label{prop:unknown_source_optimization}
Suppose that the number of sources is fixed at $N$ and that $z \in \mathbb{R}^M$ is given by
\begin{equation*}
z= S(a_1,b_1, \dotsc, a_N,b_N),
\end{equation*}
where $S$ is given by Equation \ref{eq:S_function}.

If we assign $z \in \mathbb{R}^M$ to represent the measurement in the sheaf $\shf{J}_P$ (see Equation \ref{eq:j_p})  then there is an extension to a global assignment in which the local consistency radius vanishes for all sufficiently small open sets provided that $P \ge N$.
\end{proposition}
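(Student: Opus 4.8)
The plan is to construct the required extension explicitly, by \emph{padding} the given $N$-star scene with zero-brightness stars, and then to evaluate the local consistency radius on the minimal open neighborhoods of the base space of $\shf{J}_P$. First I would recall the combinatorial data: the base space carries top-row cells $\sigma_1,\dots,\sigma_P$ with stalks $I, I^2/S_2,\dots,I^P/S_P$ and a single bottom-row cell $\tau$ with stalk $\mathbb{R}^M$, each $\sigma_k$ being a face of $\tau$ with restriction map $f_k$ from Equation \ref{eq:S_function} (so the map $S$ of the statement is $f_N$). In the associated topology the nonempty open sets are $\{\tau\}$ together with the stars $\mathrm{star}(\sigma_k)=\{\sigma_k,\tau\}$ and their unions; since the only comparable pair inside $\mathrm{star}(\sigma_k)$ is $(\sigma_k,\tau)$, any assignment with $a(\sigma_k)=w_k$ and $a(\tau)=z$ has local consistency radius $\|f_k(w_k)-z\|$ over $\mathrm{star}(\sigma_k)$ and $0$ over the singleton $\{\tau\}$.

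The key step is the padding itself. Because $z=f_N(a_1,b_1,\dots,a_N,b_N)$ and $P\ge N$, I would set
\[ w_P:=\big((a_1,b_1),\dots,(a_N,b_N),(0,c_{N+1}),\dots,(0,c_P)\big)\in I^P/S_P, \]
with arbitrary extra locations $c_{N+1},\dots,c_P\in\mathbb{R}^2$ and the remaining $P-N$ brightnesses set to $0$ (allowed since $0\in[0,\infty)$). A star of zero brightness contributes $0\cdot\delta(|x-c_j|)=0$ to the point-mass model of Equation \ref{eq:general_signal_model}, so evaluating $f_P$ on $w_P$ reproduces the sum of only the first $N$ terms; hence $f_P(w_P)=f_N(a_1,b_1,\dots,a_N,b_N)=z$ and $\|f_P(w_P)-z\|=0$.

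To obtain a genuinely global assignment I would define $a(\sigma_k)=w_k$ by the same padding, giving $f_k(w_k)=z$ for every $k$ with $N\le k\le P$, and assign the remaining cells $\sigma_k$ with $k<N$ arbitrary values (which are irrelevant to what follows). Then the local consistency radius equals $\|f_P(w_P)-z\|=0$ over $\mathrm{star}(\sigma_P)$, equals $0$ over every $\mathrm{star}(\sigma_k)$ with $k\ge N$, and equals $0$ over $\{\tau\}$; in particular it vanishes on $\mathrm{star}(\sigma_P)$ and on every open subset of it, which is exactly the assertion that it vanishes on all sufficiently small open sets. The hypothesis $P\ge N$ enters only to ensure $P-N\ge 0$, so that enough zero-brightness stars are available.

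I expect the sole delicate point to be the bookkeeping around the symmetric quotient: one must confirm that the padded tuple descends to a well-defined element of $I^P/S_P$ and that $f_P$ (the $S_P$-invariant map of Equation \ref{eq:S_function}) is well defined on this quotient, so that the displayed evaluation is legitimate. One should also fix the precise reading of ``sufficiently small open sets'' as those contained in the minimal star $\mathrm{star}(\sigma_P)$, matching the discussion following Definition \ref{def:j_p}. The remaining arithmetic, namely that the zero-brightness terms drop out of Equation \ref{eq:general_signal_model}, is routine.
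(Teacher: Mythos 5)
Your proposal is correct and follows essentially the same route as the paper: the paper's proof also pads the true $N$-star parameters with zero-brightness stars at arbitrary locations (setting $a_{P,k}=0$ for $k>N$) and propagates this assignment consistently across the stalks $I^i/S_i$ for $i\ge N$, so that every term $\|S(a_{i,1},b_{i,1},\dotsc,a_{i,i},b_{i,i})-z\|$ with $i\ge N$ vanishes. Your treatment is if anything slightly more explicit than the paper's about the Alexandrov-topology bookkeeping (which open sets are the stars, and that $f_P$ is well defined on the quotient $I^P/S_P$), but the core construction is identical.
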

\begin{proof}
To that end, $z\in \mathbb{R}^M$ is assigned in the bottom row of the diagram in Equation \ref{eq:j_p}.

Let us name the values in the assignment to the top row of the sheaf diagram 
\begin{equation*}
\{(a_{1,1},b_{1,1})\}, \{(a_{2,1},b_{2,1}),(a_{2,2},b_{2,2})\}, \{(a_{3,1},b_{3,1}),(a_{3,2},b_{3,2}),(a_{3,3},b_{3,3})\}, \dotsc, \{(a_{P,1},b_{P,1}),\dotsc,(a_{P,P},b_{P,P})\}.
\end{equation*}
Notice that these values define many possibilities for the source parameters.

If the number of sources is not known, but decompositions with $P \ge N$ sources are encoded in the sheaf, 
then the consistency radius of any assignment supported on the star open set (minimal open set in the sense of inclusion) whose stalk is $I^N/S_N$ is given by
\begin{equation}
\label{eq:consistency_radius_fixed_measurements_windowed}
\sum_{i=N}^P \|S(a_{i,1},b_{i,1} \dotsc, a_{i,i},b_{i,i}) - z\|.
\end{equation}

Define
\begin{equation*}
a_{P,k} = \begin{cases}
a_k & \text{if } k \le N,\\
0 & \text{otherwise}
\end{cases}
\end{equation*}
and
\begin{equation*}
b_{P,k} = \begin{cases}
b_k & \text{if } k \le N,\\
\text{arbitrary} & \text{otherwise.}
\end{cases}
\end{equation*}

If we then declare that $a_{i,k} =a_{j,k}$ and $b_{i,k} =b_{j,k}$ whenever they are both defined, then this definition clearly results in the sum of Equation \ref{eq:consistency_radius_fixed_measurements_windowed} being zero.
\end{proof}

\begin{corollary}
\label{cor:correct_sources}
If the $S$ function is injective, then the assignment for which the local consistency radius of $\shf{J}_P$ vanishes occurs precisely at the number of sources $N$.
\end{corollary}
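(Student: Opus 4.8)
The plan is to reduce the statement to a question about the images of the restriction maps, dispatch the easy range of source counts using Proposition~\ref{prop:unknown_source_optimization}, and then carry out the real work: showing that an honest $N$-source scene cannot be reproduced by fewer than $N$ sources. Throughout I would work in the reduced situation where all true brightnesses satisfy $a_n>0$ and the locations $b_n$ are distinct; if this fails, one first replaces $N$ by the number of genuinely distinct, nonzero sources, and the argument below applies verbatim to that reduced count.

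First I would recall from Equation~\ref{eq:consistency_radius_fixed_measurements_windowed} that the local consistency radius of $\shf{J}_P$ on the star open set of the cell with stalk $I^i/S_i$ equals $\sum_{j=i}^{P}\|f_j(w_{j,1},\dotsc,w_{j,j})-z\|$, where $f_j$ is the map of Equation~\ref{eq:S_function} with $j$ stars. Since each summand is a norm, this quantity is achieved as zero by some assignment precisely when $z\in\image f_j$ for every $j\ge i$. The corollary therefore reduces to determining the set of indices $j$ for which the equation $f_j(\,\cdot\,)=z$ is solvable, given that $z=f_N(a_1,b_1,\dotsc,a_N,b_N)$.

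For $j\ge N$ solvability is immediate: pad the true parameters with $j-N$ extra stars of zero brightness and arbitrary location, exactly as in the proof of Proposition~\ref{prop:unknown_source_optimization}, to obtain a preimage of $z$ under $f_j$. The crux is the opposite range $j<N$, where I would show $z\notin\image f_j$. Viewing the measurement through the Dirac model of Equation~\ref{eq:general_signal_model}, $z$ encodes the measure $\sum_{n=1}^N a_n\delta_{b_n}$, whereas any element of $\image f_j$ encodes a combination of at most $j$ point masses; because Dirac masses at distinct locations are linearly independent, a sum of $N$ of them with nonzero coefficients cannot equal a sum of fewer than $N$. I expect this to be the main obstacle: injectivity of the individual $f_j$ only asserts distinctness of parameters \emph{within} a fixed source count, not the non-collapse of an $N$-source signal into a $(<N)$-source one, so the argument must genuinely invoke the point-source structure (distinct $b_n$, positive $a_n$) and then use injectivity of the family only to guarantee that the finite sampling $z\in\mathbb{R}^M$ faithfully preserves this distinction rather than admitting a spurious lower-count fit.

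Finally, at $j=N$ injectivity of $f_N$ forces $f_N^{-1}(z)$ to be a single point of $I^N/S_N$, so the assignment realizing the vanishing consistency radius recovers the true brightnesses and locations uniquely. Assembling the three ranges, the local consistency radius on the star of $I^i/S_i$ is achieved as zero if and only if $i\ge N$, so the minimal open set on which it vanishes is the star of the cell with stalk $I^N/S_N$, and the recovered decomposition there is exactly the true one. This is precisely the assertion that the vanishing occurs at the true source count $N$.
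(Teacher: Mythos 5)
Your skeleton---padding via Proposition~\ref{prop:unknown_source_optimization} to handle proposed counts $j \ge N$, injectivity to pin down the unique preimage at $j = N$, and a separate argument that $z \notin \image f_j$ for $j < N$---is essentially the reasoning the paper leaves implicit: the corollary appears there with no proof at all, and the lower-count case is dismissed earlier in Section~\ref{sec-sheaf_unknown} with ``it is clear that.'' So you have correctly isolated exactly the step the paper glosses over. The problem is that your treatment of that step does not actually close it.

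The linear-independence argument you give lives at the level of measures: it shows that $\sum_{n=1}^N a_n \delta_{b_n}$ (all $a_n > 0$, $b_n$ distinct) cannot equal, \emph{as a measure}, a combination of fewer than $N$ point masses. But the object the sheaf compares is the finite sample vector $z \in \mathbb{R}^M$, and distinct measures can produce identical samples. You acknowledge this and propose to bridge it by ``using injectivity of the family to guarantee that the finite sampling faithfully preserves this distinction''---but that is a description of what is needed, not an argument, and under the natural reading of the hypothesis it is not available: injectivity of each $f_j$ on its own domain $I^j/S_j$ is a within-count statement and says nothing about whether $\image f_j$ can meet $\image f_N$; within-count injectivity and cross-count collapse are logically independent, so no amount of invoking the former yields the latter. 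What does close the gap is the padding trick you already used in the other direction: if $z = f_j(w)$ with $j < N$, append $N-j$ zero-brightness stars to $w$ to get $w' \in I^N/S_N$ with $f_N(w') = z = f_N(v)$, where $v$ is the true configuration; injectivity of $f_N$ then forces $w' = v$ in $I^N/S_N$, which is absurd since $w'$ contains a zero-brightness star and (in your reduced situation) $v$ does not. Note, however, that this uses injectivity of $f_N$ on \emph{all} of $I^N/S_N$, degenerate configurations included---and, as your own reduction hints, that literal hypothesis is vacuous for $N \ge 2$, because zero-brightness stars at different locations are distinct points of $I^N/S_N$ with the same image. So either one accepts the literal (vacuous) hypothesis and the padding argument finishes the proof, or one works on the reduced locus, in which case the cross-count claim genuinely requires an additional faithfulness assumption on the sampling map and is not a consequence of injectivity at all. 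Your proposal, as written, sits between these two options and completes neither. (A minor further slip: since the vanishing locus is the chain of stars indexed by $i \ge N$, the star over $I^N/S_N$ is the \emph{largest} such open set, not the minimal one.)
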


Therefore, from a methodological perspective, the number of sources can be determined from the consistency filtration given enough samples.
If there is noise, look for the ``knee'' in local consistency radius as the open set size increases.  
This is precisely our algorithmic approach.

% --------------------------------------------------------------------------------------------------
% Bibliography

\bibliographystyle{plain}
\bibliography{starsep-master}

\end{document}